\newcommand{\al}{\alpha}
\newcommand{\I}{{\mathbb I}}
\newcommand{\R}{{\mathbb R}}
\newcommand{\Z}{{\mathbb Z}}
\newcommand{\N}{{\mathbb N}}
\newtheorem{theorem}{Theorem}[section]
\theoremstyle{definition}
\theoremstyle{remark}
\newtheorem{remark}[theorem]{Remark}
\numberwithin{equation}{section}
\begin{document}

\title[Notes on a parameter switching method]
 {Note on a parameter switching method for nonlinear ODEs}
\author[Marius-F. Danca \and Michal Fe\v{c}kan]{Marius-F. Danca** \and Michal Fe\v{c}kan* (alphabetic order)}
\newcommand{\acr}{\newline\indent}
\address{\llap{*\,}
    Department of Mathematical Analysis and Numerical Mathematics\acr
    Comenius University\acr
    Mlynsk\'a dolina\acr
    842 48 Bratislava\acr
    SLOVAKIA\acr
    and\acr
    Mathematical Institute\acr
    Slovak Academy of Sciences\acr
    \v Stef\'anikova 49\acr
    814 73 Bratislava\acr
    SLOVAKIA}
\email{Michal.Feckan@fmph.uniba.sk}
\address{\llap{**\,}
     Department of Mathematics and Computer Science\acr
    Emanuel University\acr
    Str. Nufarului nr. 87, \acr
    410597 Oradea, \acr
    Romania\acr
    and \acr
    Romanian Institute for Science and Technology\acr
    Str. Cireºilor 29\acr
    400487 Cluj-Napoca\acr
    ROMANIA}
\email{danca@rist.ro}
\thanks{M. Fe\v ckan is partially supported by Grants VEGA-MS 1/0507/11, VEGA-SAV 2/0029/13 and APVV-0134-1}

\subjclass[2010]{Primary 34K28; Secondary 34C29, 37B25, 34H10}

\keywords{Numerical approximation of solutions, aeraging method, Lyapunov function, chaos control, anticontrol}

\begin{abstract}
In this paper we study analytically a parameter switching (PS) algorithm applied to a class of systems of ODE, depending on a single real parameter. The algorithm allows the numerical approximation of any solution of the underlying system by simple periodical switches of the control parameter. Near a general approach of the convergence of the PS algorithm, some dissipative properties are investigated and the dynamical behavior of solutions is investigated with the Lyapunov function method. A numerical example is presented.
\end{abstract}

\maketitle

\section{Introduction}\label{intro}
In \cite{D1} it was proved that the Parameter Switching (PS) algorithm, applied to a class of Initial Values Problems (IVP) modeling a great majority of continuous nonlinear and autonomous dynamical systems depending to a single real control parameter, allows to approximate any desired solution, while in \cite{D3} several applications are presented. By choosing a finite set of parameters values, PS switches in some deterministic (periodic) way the control parameter within the chosen set, for relative short time subintervals, while the underlying IVP is numerical integrated. The obtained ``switched'' solution will approximate the ``averaged'' solution obtained for the parameter replaced with the averaged of the switched values. As verified numerically, (see e.g. \cite{D2}), the switchings can be implemented even in some random manner but the convergence proof is much more complicated.

The PS algorithm applies to systems modeled by the following autonomous IVP
\begin{equation}\label{e0}
\begin{gathered}
\dot x(t)=f(x(t))+pAx(t),\quad t\in I=[0,T], \quad x(0)=x_0,
\end{gathered}
\end{equation}
\noindent for $T>0$, $x_0\in \R^n$, $p\in \R$, $A\in L(\R^n)$ and $f : \R^n\to\R^n$ a nonlinear function.

In order to ensure the uniqueness of solution, the following assumption is considered

\textbf{H1}
$f$ satisfies the usual Lipschitz condition

\begin{equation}\label{lip}
|f(y_1)-f(y_2)|\le L|y_1-y_2|\forall y_{1,2}\in\R^n,
\end{equation}

\noindent for some $L>0$.

By applying the PS algorithm to a system modeled by the IVP (\ref{e0}), it is possible to approximate any desired solution and, consequently, any attractor of the underlying system, by simple switches of the control parameter.

The great majority of known dynamical systems, such as: Lorenz, Chua, R\"{o}ssler, Chen, Lotka-Volterra, to name just a few, are modeled by the IVP (\ref{e0}).

The PS algorithm applies not only to integer-order systems, but also for fractional-order systems \cite{D2}, discrete real systems \cite{D4} and complex systems \cite{M}.

In this work, near a general approach of the convergence of the PS algorithm, some dissipative properties are discussed by means of the Lyapunov function method.

The paper is organized as follows: Section \ref{sec2} presents the PS algorithm and his numerical implementation, in Section \ref{sec3} is presented a general approach of the convergence of the PS algorithm, Section \ref{sec4} gives an estimation between the average and switched numerical solutions, while in Section \ref{sec5} a Lyapunov approach is presented. The paper ends with a conclusions section.

\section{PS algorithm}\label{sec2}

Let us partition the integration interval $I=\bigcup_j (\bigcup_{i=1}^{N} I_{i,j})$, $j\geq1$ (see the sketch in Figure \ref{fig1}, where the particular case $N=3$ has been considered), and let also denote by
$$
\mathcal{P}=\{p_1,p_2,...,p_N\}\subset \R,~~N\geq2,
$$
the switching values set.

\begin{figure*}
  \includegraphics[clip,width=0.75\textwidth]{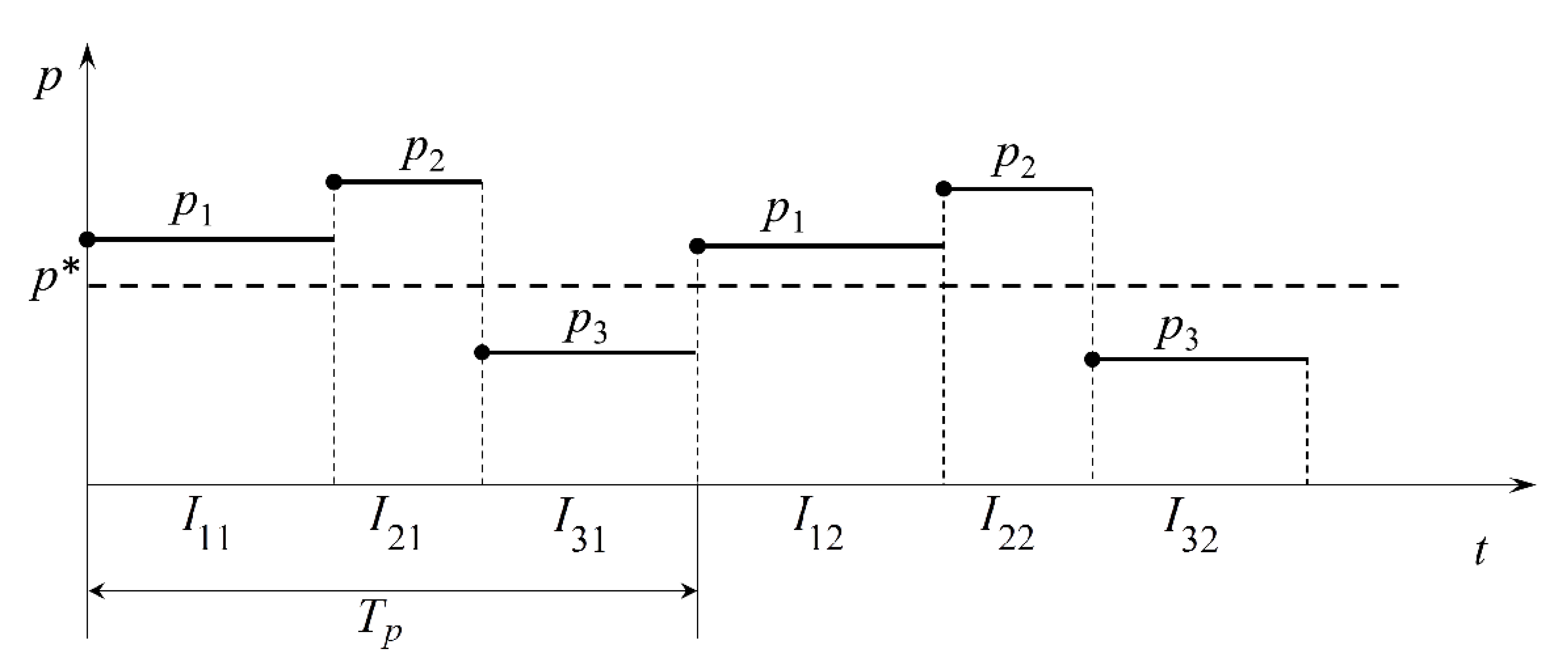}
\caption{Sketch of the time interval $I$ partition, for the case $N=3$. }
\label{fig1}       
\end{figure*}

$p$ will be considered a periodic piece-wise constant function $p:I\rightarrow \mathcal{P}$, $p(t)=p_i$, for $t\in I_{i,j}\cap I$, $1\leq i\leq N$ for every $j\geq1$. In this way, $p$ will be switched in every subinterval $I_{i,j}$ (Figure \ref{fig1}).

\vspace{2mm}
Due to the $p$ periodicity, for the sake of simplicity, let us drop next the index $j$ unless necessary.
\vspace{2mm}

In order to specify $p$ more concretely, let $h>0$ be the switching step size. Then, the subintervals $I_i$ can be expressed as follows: $I_{i}=[M_{i-1}h,M_{i}h)$, for $1\le i\le N$, with $M_{0}=0$, $M_{i}:=\sum_{k=1}^{i}m_k$, $m_i\in \N^*$ being the ``weights'' of $p_i$ in the subinterval $I_i$. Thus, $p(t)=p_h(t)=p_i$ for $t\in [M_{i-1}h,M_{i}h)$, $1 \leq i \leq N$, with period $T_p:=hM_{N}$. We suppose $T>T_p$. Thus, in the above notation we have $I_{i,j}=I_{i,j-1}+(j-1)T_p$. For example, in Figure \ref{fig1}, $I_{3,2}=I_{3,1}+T_p$.

Let denote the ``weighted average'' of the values of $\mathcal{P}$ by

\begin{equation}\label{p}
p^*:=\frac{\sum_{i=1}^Np_im_i}{\sum_{i=1}^Nm_i},
\end{equation}

\noindent Then, the switched solution, obtained with the PS algorithm by switching $p$ to $p_i$ in each subinterval $I_i$, $1\leq i \leq N$, while the underlying IVP is numerically integrated with some fixed step size $h$, will tend to the averaged solution obtained for $p$ replaced with $p^*$ \cite{D1}. Precisely, these two solutions are $O(h)$ near on $[0,T]$.

\begin{remark}
The simplicity of the PS algorithm resides in the linear appearance of $p$ in the term $pAx(t)$ in (\ref{e0}).
\end{remark}

To implement numerically the PS algorithm, the only we need is to choose some numerical method with the single step-size $h$ to integrate the underlying IVP. Thus, let us suppose one intend to approximate with PS the solution corresponding to some value $p^*$ \footnote {Due to the supposed uniqueness, to different $p$, different solutions.}. Then, we have to find a set $\mathcal{P}$ and the corresponding weights $m$, such that (\ref{p}) gives the searched value $p^*$ (details on the numerical implementation can be found in \cite{D1} or \cite{D3}).

For example, suppose we intend to approximate a stable cycle of the Lorenz system

\begin{equation*}
\begin{array}
[c]{cl}
\overset{\cdot}{x}_{1}= & a(x_{2}-x_{1}),\\
\overset{\cdot}{x}_{2}= & x_{1}(r-x_{3})-x_{2},\\
\overset{\cdot}{x}_{3}= & x_{1}x_{2}-cx_{3},
\end{array}
\label{lorenz}
\end{equation*}

\noindent corresponding to $p:=r=155$ and to the usually values $a=10$ and $c=8/3$. In this case, $f(x)$ and $A$ in (\ref{e0}) are

\[
f(x)=\left(
\begin{array}{c}
a(x_{2}-x_{1}) \\
-x_{1}x_{3}-x_{2} \\
x_{1}x_{2}-cx_{3}%
\end{array}%
\right) ,~~A=\left(
\begin{array}{ccc}
0 & 0 & 0 \\
1 & 0 & 0 \\
0 & 0 & 0%
\end{array}%
\right) .
\]

\noindent By using e.g. $\mathcal{P}=\{150,152, 168\}$, one of the possible choices for weights to obtain $p^*=155$ in (\ref{p}), is $m_1=2, m_2=1$ and $m_3=1$, i.e. $p^*=(2\times150+1\times152+1\times168)/(2+1+1)=155$. Next, by applying the PS algorithm via the standard RK method (utilized here) with $h=0.001$, after $T=10sec$ one obtains a good match between the two solutions: averaged solution (in blue) and switched (in red) (Figure \ref{fig2}).

\begin{figure*}
  \includegraphics[clip,width=0.65\textwidth]{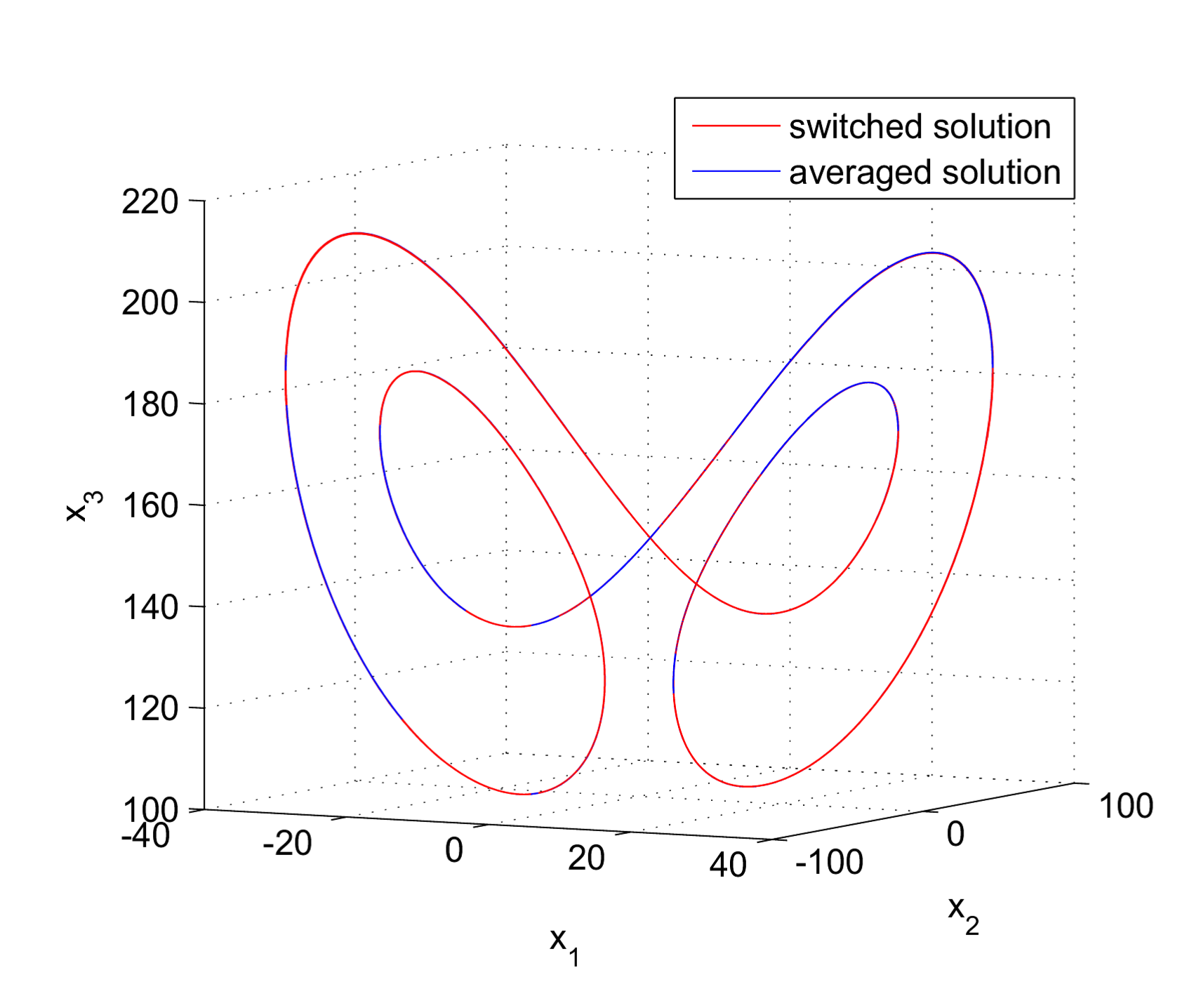}
\caption{Phase plot of the stable cycle of the Lorenz system corresponding to $p=155$, approximated with the PS algorithm by switching the values $\mathcal{P}=\{150,152, 168\}$. The averaged (blue) and the switched (red) solutions are over-plotted.}
\label{fig2}       
\end{figure*}

In all images, the transients have been neglected.

The same stable cycle (corresponding to $p=155$) can be obtained with other sets $\mathcal{P}$ and weights $m$ verifying (\ref{p}). Thus, this stable movement can be approximated, for example, by switching the following seven values: $\mathcal{P}=\{125,130,140,156,160,168,200\}$ with weights $m_1=2, m_2=3, m_3=2, m_4=2, m_5=3, m_6=1, m_7=3$. Again, these values gives $p^*=155$. This time, due to the inherent numerical errors (see \cite{D1}), to obtain a good fitting between the two curves, we had to choose a smaller step size $h=0.0001$ (Figure \ref{fig3}).

Another stable cycle, corresponding to $p=93$ is approximated in Figure \ref{fig4}, by using $\mathcal{P}=\{89,92,95\}$ and $m_1=1, m_2=2, m_3=3$ and integration step size $h=0.001$.

\begin{figure*}
  \includegraphics[clip,width=0.75\textwidth]{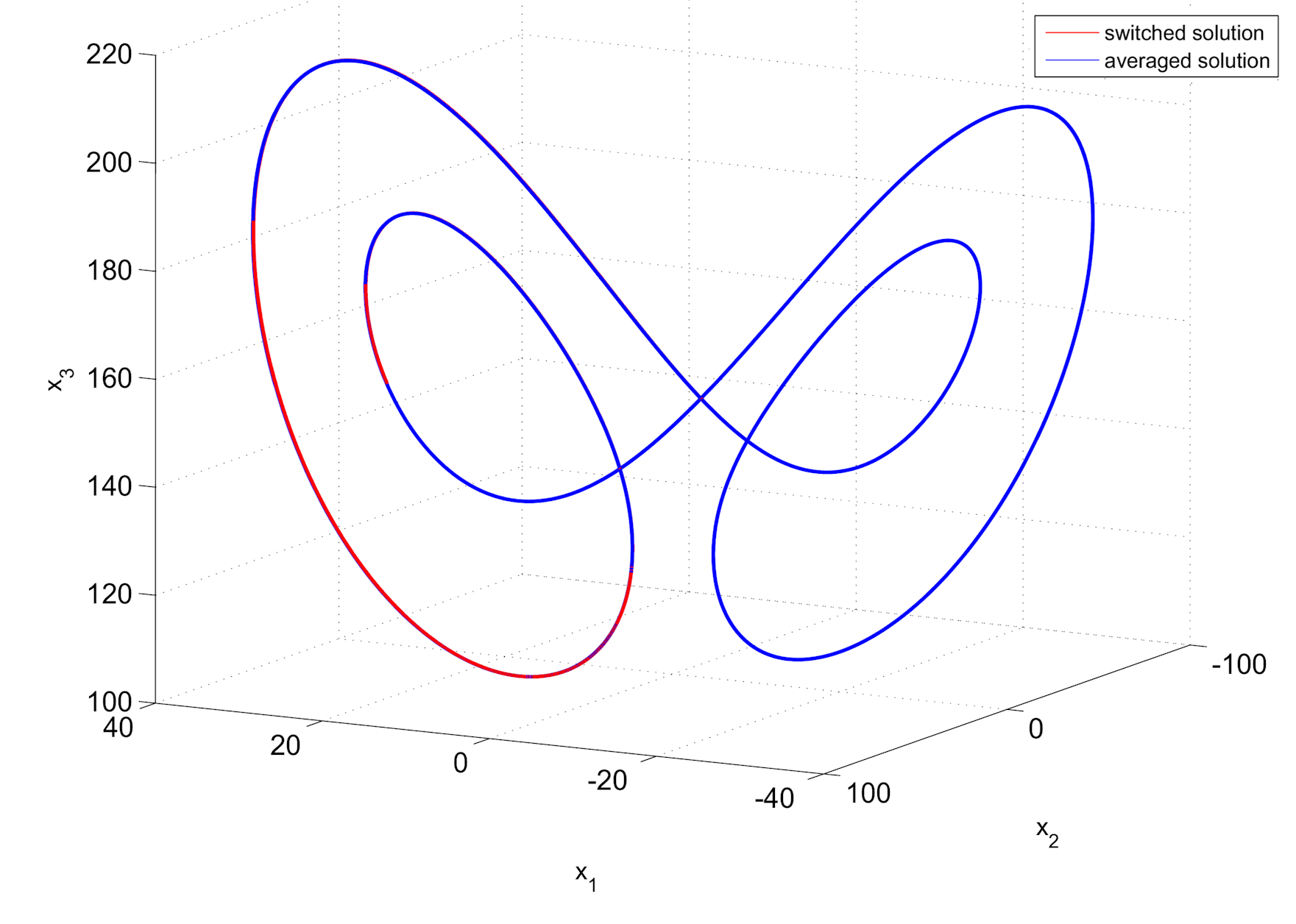}
\caption{Phase plot of the stable cycle of the Lorenz system corresponding to $p=155$, approximated with the PS algorithm by switching the values $\mathcal{P}=\{125,130,140,156,160,168,200\}$.}
\label{fig3}       
\end{figure*}

\begin{remark}\label {rem}
\item [(i)] PS algorithm is useful when, due to some objective reasons, some desired targeted value $p^*$ cannot be accessed directly.

\item [(ii)] Since, as we shall see in the next sections, with the PS algorithm any solution of the IVP (\ref{e0}) can be numerically approximated, when the obtained (switched) solution gives rise to a stable cycle, the algorithm can be considered as a chaos control-like method while, when chaotic motions are approximated, it can be considered an anticontrol-like method (see e.g. \cite{D3}). Compared to the known control methods, such as OGY-like schemes, where the unstable periodic orbits are ''forced'' to become stable, with the PS algorithm one approximate any desirable stable orbit.

\item[(iii)] The size of $h$ is considered to be stated implicitly by the utilized convergent numerical method for ODEs (here the standard RK, see e.g. \cite{SH}).
\end{remark}

\begin{figure*}
  \includegraphics[clip,width=0.75\textwidth]{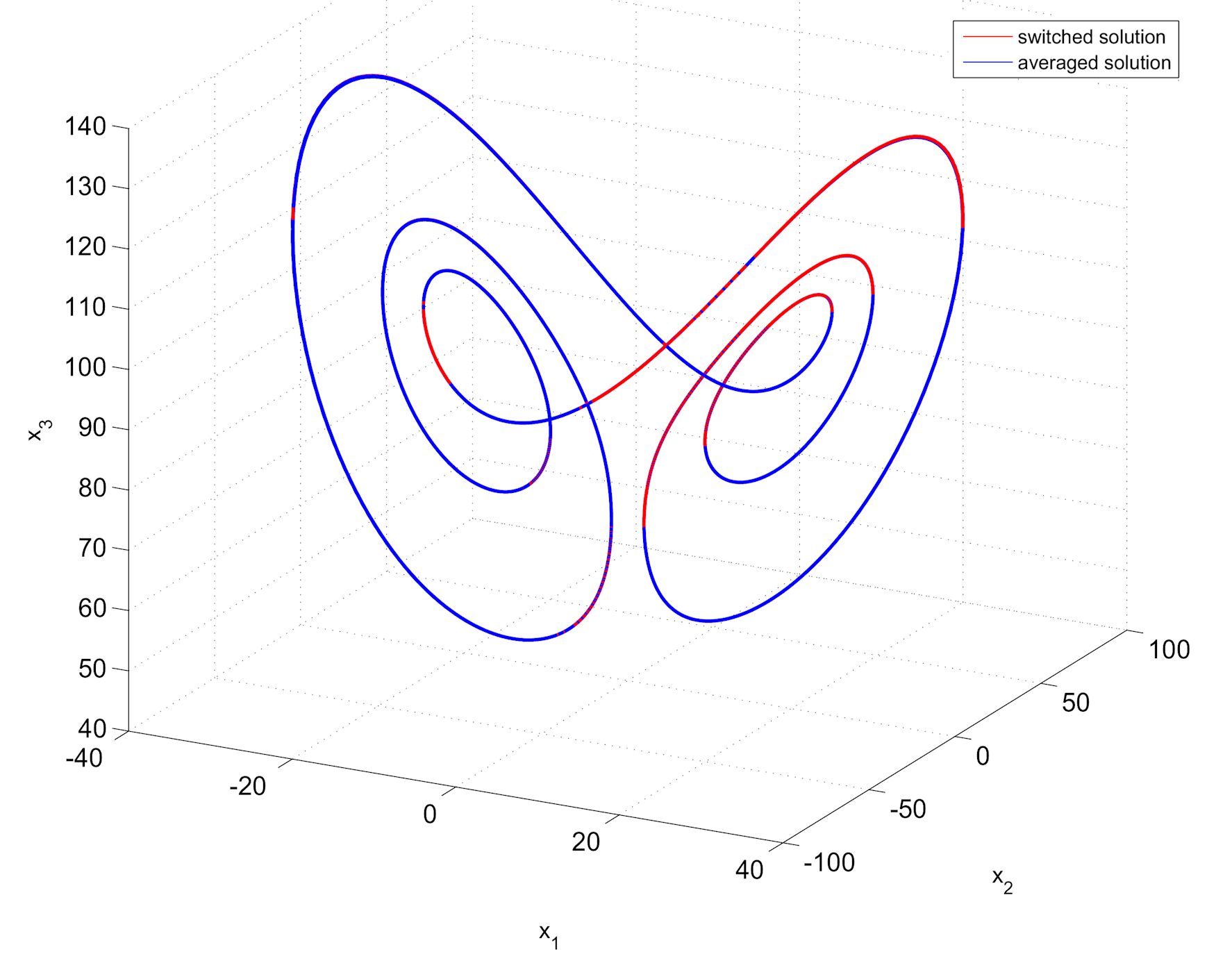}
\caption{Phase plot of the stable cycle of the Lorenz system corresponding to $p=93$, approximated with the PS algorithm by switching the values $\mathcal{P}=\{89,92,95\}$.}
\label{fig4}       
\end{figure*}

\section{Convergence of PS algorithm}\label{sec3}

 In this Section we prove the PS algorithm convergence.

Let $p_h(t):=P(t/h)$ for any $h>0$, $h$ being considered as a small parameter (see also Remark \ref{rem} (iii)). Then, the switching equation has the following form
\begin{equation}\label{e1}
\begin{gathered}
\dot x(t)=f(x(t))+p_h(t)Ax(t),\quad t\in I=[0,T],\quad x(0)=x_0,
\end{gathered}
\end{equation}
The averaged equation of (\ref{e1}) (obtained for $p$ replaced with $p^*$ given by (\ref{p})), is
\begin{equation}\label{e2}
\begin{gathered}
\dot {\bar x}(t)=f(\bar x(t))+p^*A\bar x(t),\quad t\in I=[0,T],\quad \bar x(0)=\bar x_0.
\end{gathered}
\end{equation}
Under the assumption \textbf{H1}, the convergence of the PS algorithm is given by the following theorem

\begin{theorem}
Let $\|\cdot\|_0$ be the maximum norm on $C([0,T],\R^n)$,  i.e., $\|\bar x\|_0:=\max_{t\in[0,T]}|\bar x(t)|$ for $\bar x$. Under the above assumptions, it holds
\begin{equation}\label{est1}
\begin{gathered}
|x(t)-\bar x(t)|\le ( |x_0-\bar x_0|+h\|A\|\|\bar x\|_0K)
\times e^{(L+\|P\|_0\|A\|)T},
\end{gathered}
\end{equation}
for all $t\in [0,T]$, where
$$
K:=\max_{t\in[0,M_N]}\left|\int_0^{t}(P(s)-p^*)ds\right|.
$$
\end{theorem}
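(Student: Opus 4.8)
The plan is to run a classical averaging estimate directly on the error $e(t):=x(t)-\bar x(t)$: write both equations in integral form, split off the fast forcing produced by the switching, show that it is $O(h)$, and close with Gronwall's inequality. Subtracting the integral forms of (\ref{e1}) and (\ref{e2}) (so that $e(0)=x_0-\bar x_0$),
\[
e(t)=e(0)+\int_0^t\bigl(f(x(s))-f(\bar x(s))\bigr)\,ds+\int_0^t\bigl(p_h(s)Ax(s)-p^*A\bar x(s)\bigr)\,ds ,
\]
and in the last integrand I would write $p_h(s)Ax(s)-p^*A\bar x(s)=p_h(s)Ae(s)+(p_h(s)-p^*)A\bar x(s)$. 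The first piece is harmless: $|p_h(s)Ae(s)|\le\|P\|_0\|A\|\,|e(s)|$, which together with $|f(x(s))-f(\bar x(s))|\le L|e(s)|$ from \textbf{H1} will be absorbed into the Gronwall term. (Here \textbf{H1} together with the linear growth of the right-hand sides also guarantees that $x$ and $\bar x$ exist on all of $[0,T]$, so $\|\bar x\|_0<\infty$.)

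The crux is the oscillatory forcing $\int_0^t(p_h(s)-p^*)A\bar x(s)\,ds$; the point is \emph{not} to bound it by $\int_0^t|p_h(s)-p^*|\,ds$, which is only $O(1)$. Instead, put
\[
Q(t):=\int_0^t\bigl(p_h(s)-p^*\bigr)\,ds .
\]
Since $p_h(s)=P(s/h)$, the substitution $s=hv$ gives $Q(t)=h\int_0^{t/h}\bigl(P(v)-p^*\bigr)\,dv$, and because the weighted average (\ref{p}) is exactly the mean of $P$ over one period $M_N$, the primitive $v\mapsto\int_0^v\bigl(P(u)-p^*\bigr)\,du$ is $M_N$-periodic, hence bounded by $K$; therefore $|Q(t)|\le hK$ for all $t$. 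Integrating by parts,
\[
\int_0^t(p_h(s)-p^*)A\bar x(s)\,ds=Q(t)A\bar x(t)-\int_0^tQ(s)A\dot{\bar x}(s)\,ds ,
\]
which trades the non-small factor $p_h-p^*$ for the uniformly $O(h)$ primitive $Q$. The boundary term is bounded by $hK\|A\|\,\|\bar x\|_0$ — exactly the $h$-contribution of (\ref{est1}) — and in the remaining integral I would substitute $\dot{\bar x}=f(\bar x)+p^*A\bar x$ and estimate $|\dot{\bar x}|$ uniformly via $|f(\bar x(s))|\le|f(0)|+L\|\bar x\|_0$, so that this term is $O(h)$ as well.

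Putting the estimates together yields an inequality of the form $|e(t)|\le|x_0-\bar x_0|+h\|A\|\,\|\bar x\|_0K+(\text{lower-order }O(h))+\bigl(L+\|P\|_0\|A\|\bigr)\int_0^t|e(s)|\,ds$ on $[0,T]$, and Gronwall's lemma then produces the factor $e^{(L+\|P\|_0\|A\|)T}$ and the bound (\ref{est1}). I expect the only genuinely delicate step to be the oscillatory integral: one must exploit the cancellation through the periodic primitive $Q$ rather than a crude absolute-value bound, and it is precisely there that $p^*$ being the \emph{weighted} mean (\ref{p}) and the constant $K$ enter as stated; the rest is routine.
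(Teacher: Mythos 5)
Your overall skeleton --- subtract the integral forms of (\ref{e1}) and (\ref{e2}), split $p_hAx-p^*A\bar x=p_hAe+(p_h-p^*)A\bar x$, use the $M_N$-periodicity of the primitive of $P-p^*$ (valid exactly because $p^*$ is the weighted mean (\ref{p})) to produce the factor $hK$, and close with Gronwall --- is the same as the paper's. The divergence is in how the oscillatory term is handled: the paper simply writes $\left|\int_0^t(p_h(s)-p^*)A\bar x(s)\,ds\right|\le\left|\int_0^t(p_h(s)-p^*)\,ds\right|\|A\|\|\bar x\|_0\le hK\|A\|\|\bar x\|_0$, with no integration by parts, and this yields exactly the constant in (\ref{est1}). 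You instead integrate by parts against the primitive $Q$. Your treatment is in fact the more defensible one: since $p_h-p^*$ changes sign, pulling $\|\bar x\|_0$ out of the integral while retaining only the \emph{signed} integral of $p_h-p^*$, as the paper does, is not a legitimate inequality for a variable weight $\bar x(s)$; your boundary term $Q(t)A\bar x(t)$ reproduces the paper's contribution $h\|A\|\|\bar x\|_0K$, and the remaining term $\int_0^tQ(s)A\dot{\bar x}(s)\,ds$ is precisely what must be added to make that step correct. The price is that this extra term, bounded by $hKT\|A\|\bigl(|f(0)|+(L+|p^*|\,\|A\|)\|\bar x\|_0\bigr)$, survives into your final Gronwall estimate, so what your argument actually proves is (\ref{est1}) with an additional additive $O(h)$ contribution, not the literal constant $|x_0-\bar x_0|+h\|A\|\|\bar x\|_0K$ stated in the theorem. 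In short: same decomposition, same role of $K$ and Gronwall, but your route is the classical averaging/integration-by-parts argument (in the spirit of the transformation used in Section \ref{sec5}); it is rigorous and yields the same $O(h)$ conclusion with a slightly larger constant, whereas the paper reaches the stated constant through a shortcut that your extra term is designed to repair.
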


\begin{proof}

From \eqref{e1} and \eqref{e2} we get
$$
\begin{gathered}
|x(t)-\bar x(t)|\le |x_0-\bar x_0|+ L\int_0^t|x(s)-\bar x(s)|ds
+\left|\int_0^t(p_h(s)-p^*)ds\right|\|A\|\|\bar x\|_0\\
+\|P\|_0\|A\|\int_0^t|x(s)-\bar x(s)|ds
= |x_0-\bar x_0|+\|A\|\|\bar x\|_0\left|\int_0^t(p_h(s)-p^*)ds\right|\\
+(L+\|P\|_0\|A\|)\int_0^t|x(s)-\bar x(s)|ds.
\end{gathered}
$$
Note that
\begin{equation*}
\max_{t\in[0,T]}|p_h(t)|\le \|P\|_0=\max_{t\in[0,M_N]}|P(t)|=\max|m_i|.
\end{equation*}
Next it holds
$$
\int_0^t(p_h(s)-p^*)ds=h\int_0^{t/h}(P(s)-p^*)ds.
$$
Since $\int_0^{t}(P(s)-q)ds$ is $M_N$-periodic, we have
$$
\max_{t\in[0,T]}\left|\int_0^{t/h}(P(s)-p^*)ds\right|\le K.
$$
Note $K$ is independent of $T$. Hence
$$
\begin{gathered}
|x(t)-\bar x(t)|= |x_0-\bar x_0|+h\|A\|\|\bar x\|_0K
+(L+\|P\|_0\|A\|)\int_0^t|x(s)-\bar x(s)|ds.
\end{gathered}
$$
Finally, by Gronwall inequality \cite{HH}, we obtain \eqref{est1}. The proof is finished.
\end{proof}
\begin{remark}
This proof is more general than the proof presented in \cite{Mao}, where the convergence is obtained via the averaging method (see  \cite{SVM}).
\end{remark}

\section{Numerical approximation estimates}\label{sec4}

Next, by using numerical approximation estimates, another generalized proof of the convergence of PS algorithm is presented.

Let us consider again the switched and averaged equation (\ref{e1}) and (\ref{e2}) respectively.

For each
$$
t\in[M_ih+kT_p,M_{i+1}h+kT_p)\cap[0,T],\quad k\in \Z,
$$
the differential equation \eqref{e1} is actually an autonomous ODE
\begin{equation}\label{e3}
\dot x(t)=f(x(t))+pAx(t),
\end{equation}
with $p=p_i$. Let $\varphi_p(t,x)$ be the flow of \eqref{e3}. Then it holds
$$
\begin{gathered}
|\varphi_p(t,x_1)-\varphi_p(t,x_2)|\le |x_1-x_2|
+ (L+|p|\|A\|)\int_0^t|\varphi_p(s,x_1)-\varphi_p(s,x_2)|ds,
\end{gathered}
$$
and by Gronwall inequality \cite{HH} we obtain
\begin{equation}\label{e4}
|\varphi_p(t,x_1)-\varphi_p(t,x_2)|\le |x_1-x_2| e^{(L+|p|\|A\|)t}.
\end{equation}

For some discretization scheme  $\psi\in C([0,h_0]\times \R^n, \R^n)$, with $h_0\in (0,1)$ of \eqref{e3}, we suppose \cite{HLW,SH} that
\begin{equation}\label{order}
|\psi_p(h,x)-\varphi_p(h,x)|\le Mh^{r+1}\, \forall (h,x)\in [0,h_0]\times \R^n,
\end{equation}
for some $M>0$ and $r\in\N$.

Let us consider now the  numerical approximation sequence of the switched equation (\ref{e1}) $u_0:=x_0$, and by induction: $u_{j+1}:=\psi_p(h,u_j)$, with $p=p_i$ and $jh\in[M_ih+kT_p,M_{i+1}h+kT_p)\cap[0,T]$.

Similarly, we consider the sequences corresponding to the averaged equation (\ref{e2}) $\bar v_j=\bar x(jh),\bar v_j^*:=\psi_{p^*}^j(h,\bar x_0), v_0:=x_0$.

By induction: $v_{j+1}:=\varphi_p(h,v_j)$, with $p=p_i$ and $jh\in[M_ih+kT_p,M_{i+1}h+kT_p)\cap[0,T]$.

Note that (see \eqref{e1}) $v_j=x(jh)$, so by \eqref{est1}, we have
\begin{equation}\label{est2}
|v_j-\bar v_j|\le ( |x_0-\bar x_0|+h\|A\|\|\bar x\|_0K) e^{(L+\|P\|_0\|A\|)T},
\end{equation}
when $jh\in [0,T]$. Next using \eqref{e4} and \eqref{order}, we derive
$$
\begin{gathered}
|u_{j+1}-v_{j+1}|=|\psi_p(h,u_j)-\varphi_p(h,v_j)|\\ \le|\psi_p(h,u_j)-\varphi_p(h,u_j)|
+|\varphi_p(h,u_j)-\varphi_p(h,v_j)|\\
\le |u_j-v_j| e^{(L+\|P\|_0\|A\|)h}+Mh^{r+1}.
\end{gathered}
$$
Then we get
\begin{equation}\label{est3}
\begin{gathered}
|u_j-v_j|\le Mh^{r+1}j e^{(L+\|p\|_0\|A\|)jh}
\le MTh^{r} e^{(L+\|P\|_0\|A\|)T},\, jh\in [0,T].
\end{gathered}
\end{equation}
Similarly, we derive
$$
\begin{gathered}
|\bar v_{j+1}^*-\bar v_{j+1}|=|\psi_{p^*}(h,\bar v_{j}^*)-\varphi_{p^*}(h,\bar v_j)|\\
\le|\psi_{p^*}(h,\bar v_{j}^*)-\varphi_{p^*}(h,\bar v_{j}^*)|
+|\varphi_{p^*}(h,\bar v_{j}^*)-\varphi_{p^*}(h,\bar v_j)|\\
\le |\bar v_{j}^*-\bar v_j| e^{(L+\|P\|_0\|A\|)h}+Mh^{r+1}.
\end{gathered}
$$
Then we get
\begin{equation}\label{est4}
|\bar v_{j}^*-\bar v_j|\le MTh^{r} e^{(L+\|P\|_0\|A\|)T},\, jh\in [0,T].
\end{equation}
Consequently, combining \eqref{est2}, \eqref{est3} and \eqref{est4}, we arrive at
\begin{equation}\label{est5}
\begin{gathered}
|\bar v_{j}^*-u_j|\le |\bar v_{j}^*-\bar v_j|+|\bar v_{j}-v_j|+|v_j-u_j|\\
\le h(|x_0-\bar x_0|+2MTh^{r-1}+\|A\|\|\bar x\|_0K)e^{(L+\|P\|_0\|A\|)T},\, jh\in [0,T].
\end{gathered}
\end{equation}

\begin{remark}
Inequality \eqref{est5} gives an estimation between numerical solutions of the switched equation \eqref{e1} and averaged equation \eqref{e2} by applying one-step method of order $r$ with step size $h$ and represents another generalization of the PS convergence for any utilized Runge-Kutta method (see \cite{Mao} where the convergence is proved via the standard RK method).
\end{remark}

\section{Lyapunov method approach}\label{sec5}

We consider \eqref{e1} on $I=[0,\infty)$ and assume that $f\in C^3(\R^n,\R^n)$. Motivated by \cite[pp. 168-169]{GH} and \cite{SVM}, we take a change of variables
\begin{equation}\label{e5}
\begin{gathered}
x(t)=y(t)+\Big(h(P_i-M_ip^*)+(t-h(M_i+kM_N))(p_{i+1}-p^*)\Big)Ay(t),
\end{gathered}
\end{equation}
for $t\in[M_ih+kT_p,M_{i+1}h+kT_p)$, where $k\in\Z$ and $P_i=\sum_{j=1}^ip_jm_j$, to derive
$$
\begin{gathered}
f\Big(y+\big(h(P_i-M_ip^*)
+(t-h(M_i+kM_N))(p_{i+1}-p^*)\big)By\Big)\\
+p_{i+1}A\Big(y+\big(h(P_i-M_ip^*)
+(t-h(M_i+kM_N))(p_{i+1}-p^*)\big)By\Big)\\
=f(x)+p_{i+1}Ax=\dot x
=\Big(\I+\big(h(P_i-M_ip^*)\\
+(t-h(M_i+kM_N))(p_{i+1}-p^*)\big)A\Big)\dot y
+(p_{i+1}-p^*)By.
\end{gathered}
$$
Next, since $P_i\le M_Np^*$, $p_i\le\|P\|_0$, $p^*\le\|P\|_0$ and $m_i\le M_i\le M_N$, we have
\begin{equation}\label{e6a}
\begin{gathered}
\|(h(P_i-M_ip^*)+(t-h(M_i+kM_N))(p_{i+1}-p^*))A\|\\
\le 2M_Nh\|A\|(p^*+\|P\|_0).
\end{gathered}
\end{equation}
Hence if
\begin{equation}\label{e6}
4M_Nh\|A\|(p^*+\|P\|_0)\le 1,
\end{equation}
then by Neumann theorem \cite{R}, the matrix
$$
\begin{gathered}
\I+\big(h(P_i-M_ip^*)
+(t-h(M_i+kM_N))(p_{i+1}-p^*)\big)A
\end{gathered}
$$
is invertible and
\begin{equation}\label{e7}
\begin{gathered}
\Big\|\Big(\I+\big(h(P_i-M_ip^*)
+(t-h(M_i+kM_N))(p_{i+1}-p^*)\big)A\Big)^{-1}\Big\|\le 2,\\
 \Big\|\Big(\I+\big(h(P_i-M_ip^*)
+(t-h(M_i+kM_N))(p_{i+1}-p^*)\big)A\Big)^{-1}-\I\Big\|\\
\le 4M_Nh\|A\|(p^*+\|P\|_0).
\end{gathered}
\end{equation}
Consequently, we get a system
\begin{equation}\label{e8}
\dot y(t)=f(y(t))+p^*Ay(t)+hg(y,t,h),
\end{equation}
where
$$
\begin{gathered}
g(y,t,h)=\frac{1}{h}\Big[\Big(\I+\big(h(P_i-M_ip^*)
+(t-h(M_i+kM_N))(p_{i+1}-p^*)\big)A\Big)^{-1}\\
\times\Big\{f\Big(y+\big(h(P_i-M_ip^*)
+(t-h(M_i+kM_N))(p_{i+1}-p^*)\big)By\Big)
+p_{i+1}A\big(h(P_i-M_ip^*)\\
+(t-h(M_i+kM_N))(p_{i+1}-p^*)\big)By\Big\}-f(y)\Big].
\end{gathered}
$$
By \eqref{e6a} and \eqref{e7}, we note
\begin{equation}\label{e9}
\begin{gathered}
|g(y,t,h)|\le \frac{1}{h}\Big\|\Big(\I+\big(h(P_i-M_ip^*)
+(t-h(M_i+kM_N))(p_{i+1}-p^*)\big)A\Big)^{-1}\Big\|\\
\times\Big\{\Big|f\Big(y+\big(h(P_i-M_ip^*)
+(t-h(M_i+kM_N))(p_{i+1}-p^*)\big)By\Big)-f(y)\Big|\\
+\|P\|_0\|A\||h(P_i-M_ip^*)
+(t-h(M_i+kM_N))(p_{i+1}-p^*)|\|A\||y|\Big\}\\
+\frac{1}{h}\Big\|\Big(\I+\big(h(P_i-M_ip^*)
+(t-h(M_i+kM_N))(p_{i+1}-p^*)\big)A\Big)^{-1}-\I\Big\||f(y)|\\
\le 4M_N\|A\|(p^*+\|P\|_0)
\times(L+\|p\|_0\|A\|+|f(0)|+L|y|).
\end{gathered}
\end{equation}
Assume that the averaged system \eqref{e2} has a strict Lyapunov function $V$ on a bounded subset $M\subset \R^n$ \cite{RHL}, so there is an $\al>0$ such that
\begin{equation}\label{e10}
\dot V(y):=V'(y)(f(y)+p^*By)\le-\al\quad\forall y\in M.
\end{equation}
According to \eqref{e9}, $\|g(y,t,\lambda)\|\le Q$ for any $y\in M$, $t$ and $h$ satisfying \eqref{e6} with
$$
\begin{gathered}
Q:=4P\|A\|(p^*+\|P\|_0)\Big(L+\|P\|_0\|A\|+|f(0)|
+L\textrm{dist}\{0,M\}\Big).
\end{gathered}
$$
Then if $y(t)\in M$, for a solution of (\ref{e8}), we get
$$
\begin{gathered}
\frac{d}{dt}V(y(t)):=hV'(y(t))\Big(f(y(t))+p^*Ay(t)
+\lambda g(y(t),t,h)\Big)\\
\le h(-\al+hQ)\le -h\al/2,
\end{gathered}
$$
when $h\le \frac{\al}{2Q}$ along with \eqref{e6}, i.e.
\begin{equation}\label{e11}
h\le \min\left\{\frac{\al}{2Q},\frac{1}{4M_N\|A\|(p^*+\|P\|_0)}\right\}.
\end{equation}
Therefore, by using the Lyapunov function, we can get some stability, dissipation and domain of attractions from (\ref{e2}) to (\ref{e1}). As a simple example, we can suppose that \eqref{e2} is inward oriented in some domain $\Omega\subset \R^n$ where $\Omega$ is an open bounded subset with a smooth boundary $\partial \Omega$, i.e. there is a smooth function $V :\R^n\to \R$ such that $V^{-1}(0)=\partial \Omega$ and \eqref{e10} holds for a positive constant $\al>0$ with $M=\partial \Omega$. Then, applying the above results, $\Omega$ is also inward for \eqref{e1} with $h$ satisfying \eqref{e11}, so $\Omega$ is invariant and locally attracting for \eqref{e1}. Then, the Brouwer fixed point theorem \cite{F1} ensures an existence of $T_p$-periodic solution of \eqref{e1} in $\Omega$.

\section{Conclusion}\label{sec6}

This paper is devoting to a comprehensive analytical study of the PS algorithm. We presented a general approach of his convergence and derive precise error estimates for solutions in terms of the switching step size. We also tackle the dynamical behavior of solutions via the Lyapunov method.

It is to note that the algorithm can be directly extended to the case when $p$ is not $T_p$-periodic.

Finally, as possible further directions, the convergence of the PS algorithm for discontinuous systems and for fractional-order systems will be considered.


\begin{thebibliography}{88}

\bibitem{D1} Danca, M.-F.: Convergence of a parameter switching algorithm for a class of nonlinear
continuous systems and a generalization of Parrondo's paradox, Commun. Nonlinear Sci. Numer. Simulat. 18, 500--510 (2013).

\bibitem{D3} Danca, M.-F., Romera, M., Pastor, G., Montoya, F.: Finding Attractors of Continuous-Time Systems by Parameter Switching, Nonlinear Dyn. 67, 2317-–2342 (2012).

\bibitem{D2} Danca, M.-F., Random parameter-switching synthesis of a class of hyperbolic attractors, CHAOS 18, 033111 (2008).

\bibitem{D4} Romera, M., Small, M., Danca, M.-F., Deterministic and random synthesis of discrete chaos, Applied Mathematics and Computation 192, 283–-297, 2007.

\bibitem{M}  Almeida, J., Peralta-Salas, D., Romera, M., Can two chaotic systems give rise to order?, Physica D 200, 1–2, 124-–132.

\bibitem{SH} Stuart, A. M., Humphries, A. R., Dynamical Systems and Numerical Analysis, Cambridge Univ. Press, Cambridge (1998).

\bibitem{HH} Hale, J.K.: Ordinary Differential Equations, Dover Publications, New York, 2009 (first published: John Wiley \& Sons, 1969)

\bibitem{Mao} Mao, Y., Tang, W.K.S., Danca, M-F., An Averaging Model for Chaotic System with Periodic Time-Varying Parameter, Applied Mathematics and Computation 217, 355--362, 2010.

\bibitem{HLW} Hairer, E., Lubich, Ch., Wanner, G., Geometric Numerical Integration: Structure-Preserving Algorithms for Ordinary Differential Equations, 2$^{nd}$ edition, Springer-Verlag, Berlin (2006)

\bibitem{GH} Guckenheimer, J., Holmes, P., Nonlinear Oscillations, Dynamical Systems and Bifurcation of Vector Fields, Springer-Verlag, New-York (1983).

\bibitem{SVM} Sanders, J. A., Verhulst, F., Murdock, J.: Averaging Methods in Nonlinear Dynamical Systems, 2$^{nd}$ edition, Springer-Verlag, New York (2007).

\bibitem{R} Rudin, W., Functional Analysis,  McGraw-Hill, New York (1973).

\bibitem{RHL} Rouche, N., Habets, P., Laloy, M., Stability Theory by Liapunov's Direct Method, Springer-Verlag, New York (1977).

\bibitem{F1} Fe\v ckan, M., Topological Degree Approach to Bifurcation Problems, 1$^{th}$ edition, Springer, Berlin (2008).

\end{thebibliography}
\end{document}